\title{Consistency of the Maximal Information Coefficient Estimator}
\author{%
  John Lazarsfeld \\
  Yale University \\
  john.lazarsfeld@yale.edu
  \and
  Aaron Johnson \\
  U.S. Naval Research Laboratory \\
  aaron.m.johnson@nrl.navy.mil
}
\date{July 2021}
\begin{document}
\maketitle

\begin{abstract}
  The Maximal Information Coefficient (MIC) of Reshef et al. \cite{reshef2011}
  is a statistic for measuring dependence between variable pairs in
  large datasets. In this note, we prove that MIC is a
  consistent estimator of the corresponding population statistic $\micstar$.
  This corrects an error in an argument of Reshef et al.
  \cite{reshef16-mice}, which we describe.
\end{abstract}

%
%
%
%

\section{Introduction}

The Maximal Information Coefficient (MIC) of $n$
two-dimensional data points is a statistic introduced by
Reshef et al. \cite{reshef2011} for measuring the dependence
between pairs of variables.
In later work \cite{reshef16-mice}, the authors introduced the
$\micstar$ statistic, which is defined analogously to MIC but
for jointly-distributed pairs of random variables.
Both statistics are based upon measuring the mutual information
of the discrete distributions specified by imposing finite grids
over the data (respectively, the joint distributions).

Given a dataset $D_n$ of $n$ points drawn iid from a jointly-distributed
pair of random variables $(X, Y)$, the authors of \cite{reshef16-mice}
sought to show that the $\mic$ statistic is a \textit{consistent estimator}
of $\micstar$ (i.e., that $\mic (D_n)$ converges in probability to
$\micstar(X, Y)$ as $n \to \infty$).
In this note, we identify and correct an error in an argument of
\cite{reshef16-mice} related to proving this consistency. 
Our new proof modifies the original approach of Reshef et al.,
but the result is slightly weaker than the originally-desired claim
(the set of parameters for which the consistency holds is smaller).
It is left open whether the full consistency claim of
\cite{reshef16-mice} can be recovered. 

After introducing some notation in Section~\ref{sec:prelims},
we describe the flaw in the original
argument of Reshef et al. in Section~\ref{sec:reshef-error}, and then we
provide our new proof of consistency in Section~\ref{sec:consistency}.

\section{Preliminaries}
\label{sec:prelims}

We primarily adopt the notation used by Reshef et al.~\cite{reshef16-mice},
and we summarize a few key pieces here.
For the sake of brevity, readers should refer to the original
paper of Reshef et al. for exact definitions of the $\mic$
and $\micstar$ statistics.

Let $(X, Y)$ denote a pair of jointly-distributed random variables,
and let $D_n$ be a sample of $n$ points drawn iid from $(X, Y)$.
If $G$ is a grid partition with $k \ge 2$ rows
and $\ell \ge 2$ columns, then $(X, Y)|_G$ denotes the discrete
distribution induced by $(X, Y)$ on the $k\ell$ cells of $G$.
We let $M$ denote the $\micstar$ population characteristic matrix
for $(X, Y)$, and we let $\widehat M$ denote the $\mic$ sample
characterisic matrix for $D_n$.
We say that a grid partition $\Gamma$ is an \textit{equipartition}
of $(X, Y)$ if all rows of $(X, Y)|_\Gamma$
have equal probability mass and all columns of $(X, Y)|_\Gamma$
have equal probability mass.
The total variation distance between distributions
$\Pi$ and $\Psi$ is given by
\begin{equation*}
  D_{\text{TV}}(\Pi, \Psi) = \tfrac{1}{2} \| \Pi - \Psi \|_1 .
\end{equation*}
We use $I(X, Y)$ to denote the mutual information of a
jointly-distributed pair of random variables $(X, Y)$.


%
%

\section{Flaw in Original Argument}
\label{sec:reshef-error}

Here we point out the error in the proof of Lemma 37
from \cite[Appendix A]{reshef16-mice} in the
paragraph with header ``\textit{Bounding the $\epsilon_{i, j}$}'' . 

To provide context, we consider a jointly-distributed pair of
random variables $(X, Y)$ and a dataset $D_n$ of $n$ points drawn iid
from $(X, Y)$ for some $n > 0$. 
Let $\Gamma$ be an equipartition of $(X, Y)$ with $kn^{\epsilon/4}$ rows
and $\ell n^{\epsilon/4}$ columns for some $k, \ell \ge 2$ and $\epsilon > 0$.
Let $C(n) = k\ell n^{\epsilon/2}$ denote the total number of cells in
$\Gamma$. So $\Pi = (X, Y)|_\Gamma$ and $\Psi = D_n|_\Gamma$ are discrete
distributions, and we let $\pi_{i, j}$ and $\psi_{i, j}$
denote their PMFs respectively.
Note also that because the $n$ points of $D_n$ are drawn iid from $(X, Y)$,
for any cell $(i, j)$ the quantity $n \psi_{i, j}$ is the sum
of $n$ iid Bernoulli random variables  with mean $\pi_{i,j}$,
and so $\E[n \psi_{i, j}] = n \pi_{i, j}$.

The purpose of Lemma 37 is to give a uniform bound on the
absolute difference $|I(D_n |_G) - I((X, Y)|_G)|$ for all
$k \times \ell$ grids $G$ that holds with high probability.
The strategy of the original authors is to obtain this bound
by introducing the ``common'' equipartition grid $\Gamma$.
The main consequence of the error in Lemma 37 is that
the probabilistic guarantee of the subsequent Lemma 38 does
not hold as stated, which prevents the overall argument
of consistency from going through. We start by pointing out the
error in the proof of Lemma 37 before showing an example of
its consequences on Lemma 38.

\paragraph*{Error in Lemma 37}

For every $(i, j)$, the authors define
$\epsilon_{i, j} = (\psi_{i,j} - \pi_{i, j})/\pi_{i, j}$,
and so
\begin{align}
  | \epsilon_{i, j} | = \left| \frac{\psi_{i,j} - \pi_{i, j}}{\pi_{i, j}} \right|
  \;\ge\; \delta
  \;\;\;\implies\;\;\;
  | n \psi_{i, j} - n \pi_{i, j}| \;\ge\; \delta \cdot n \pi_{i, j}.
\end{align}

Given that $n\psi_{i, j}$ is the sum of $n$ iid Bernoulli RVs,
the authors state the following multiplicative Chernoff bound:
\begin{align}
  \Pr[\; |\epsilon_{i, j}| \ge \delta \;]
  \;=\;
  \Pr[\;
  |\psi_{i,j} - \pi_{i, j} | \ge \delta \cdot \pi_{i, j}
  \;]
  &\;=\;
  \Pr[\;
  |n \psi_{i,j} - n \pi_{i, j} | \ge \delta \cdot n \pi_{i, j}
  \;] \\
  &\;\le\;
  \exp\left(
  - \Omega(n \pi_{i, j} \delta^2)
  \right).
  \label{cb-error-2}
\end{align}

For reference, we state below the standard two-sided
Chernoff bound from Corollary 4.6
in \cite{mmbook}, which says that
\begin{align}
  \Pr\left[\;
  | n \psi_{i, j}  - n \pi_{i, j}|
  \ge \delta \cdot n \pi_{i,j}
  \; \right]
  \le
  2 \exp\left(
  - \frac{n \pi_{i, j}\cdot \delta^2}{3}
  \right)
  \label{eq:chernoff}
\end{align}
for any $0 < \delta < 1$.

Now, the authors set $\delta = \sqrt{\pi_{i, j}}/ C(n)^{0.5 + \alpha}$
for some $\alpha \ge 0$.
Applying the bound from \eqref{cb-error-2} with this value of $\delta$,
the authors write
\begin{align}
  \Pr\left[\;
  |\epsilon_{i, j}|
  \;\ge\;
  \frac{\sqrt{\pi_{i, j}}}{C(n)^{0.5 + \alpha}}
  \;\right]
  \;\le\;
  \exp\left(
  - \Omega\left(
  \frac{n}{C(n)^{1 + 2\alpha}}
  \right)
  \right),
  \label{eq:eps-error-orig}
\end{align}
which incorrectly drops a dependence on $\pi_{i, j}$. 

A correct application of the true Chernoff bound from
\eqref{eq:chernoff} instead yields
\begin{align}
  \Pr\left[\;
  |\epsilon_{i, j}|
  \;\ge\;
  \frac{\sqrt{\pi_{i, j}}}{C(n)^{0.5 + \alpha}}
  \;\right]
  &=
    \Pr\left[\;
    \left| n \psi_{i,j} - n \pi_{i, j} \right|
    \;\ge\;
    \frac{\sqrt{\pi_{i, j}}}{C(n)^{0.5 + \alpha}}
    \cdot
    n \pi_{i, j}
   \;\right] \\
  &\le
  2 \exp\left(
    - \frac{n \pi_{i, j}}{3}
    \cdot
    \frac{\pi_{i, j}}{C(n)^{1 + 2\alpha}}
  \right) \\
  &\le
  2 \exp\left(
    - \Omega\left(
    \frac{n \cdot (\pi_{i, j})^2}{C(n)^{1 + 2\alpha}}
    \right)
    \right).
    \label{eq:eps-error}
\end{align}
Compared to \eqref{eq:eps-error-orig}, the
term in \eqref{eq:eps-error} resulting from the correct
application of the Chernoff bound has a dependence on
$\pi_{i, j}$.
This means that the bound on the error probability of
$|\epsilon_{i, j}| \ge \delta$ becomes non-negligible as the
value of some $\pi_{i, j}$ goes to 0.
We note that since the number of grid cells $(i, j)$ increases
with $n$ by the definition of $\Gamma$, the value of
any $\pi_{i, j}$ can decrease with $n$. 
When this occurs, we expect the bound on the probability
in \eqref{eq:eps-error} to grow undesirably large. 

\paragraph*{Ramifications on Lemma 38}

Using the corrected error probability from \eqref{eq:eps-error}
for a single $\epsilon_{i, j}$
and taking a union bound over all $(i, j)$
means that the statement of Lemma 37 now holds with
total error probability at most 
\begin{align}
  2 \sum_{(i,j)}\;
  \exp\left(- \Omega\left(
  \frac{n (\pi_{i, j})^2}{C(n)^{1 + 2\alpha}}
  \right)
  \right).
  \label{eq:37-error}
\end{align}
This updated probability seems to prevent the proof (as written) of
Lemma 38 in \cite{reshef16-mice} from working in general.

For example, consider the special case where 
$X$ and $Y$ are independent.
Then the discrete distribution $\Pi = (X, Y)|_\Gamma$ has
PMF $\pi_{i, j} = 1 / C(n)$ for all $(i, j)$ given that
$\Gamma$ is an equipartition.

The overall error term from Lemma 37 in \eqref{eq:37-error}
can be rewritten then as
\begin{align}
  2 \sum_{(i,j)}\;
  \exp\left(- \Omega\left(
  \frac{(n \cdot (1/C(n))^2}{C(n)^{1 + 2\alpha}}
  \right)
  \right)
  \;=\;
  2 \cdot C(n)
  \exp\left(- \Omega\left(
  \frac{n}{C(n)^{3 + 2\alpha}}
  \right)
  \right) . 
  \label{eq:37-error-new}
\end{align}

Now in Lemma 38, the authors consider $ k \ell \le B(n) = O(n^{1-\epsilon})$, 
and since $C(n) = k \ell n^{\epsilon/2}$, we have
\begin{align}
  C(n) \le B(n) \cdot n^{\epsilon/2} = O(n^{1 - \epsilon/2})
\end{align}
as written on page 34. 
But this means
\begin{align}
  C(n)^{3 + 2\alpha} =
  O\left(n^{(1-\epsilon/2) \cdot (3 + 2\alpha)}\right).
  \label{eq:new-38}
\end{align}
The current strategy in the proof of Lemma 38 relies on
bounding the error term in \eqref{eq:37-error-new} as
\begin{align}
  2 \cdot C(n)
  \exp\left(- \Omega\left(
  \frac{n}{C(n)^{3 + 2\alpha}}
  \right)
  \right)
  \;\le\;
  O(n)
  \exp\left(- \Omega\left(
  n^u
  \right)
  \right)
  \label{eq:38-goal}
\end{align}
for some $u > 0$. To achieve such a bound requires
$C(n)^{3+2\alpha} = o(n)$, and in turn this requires
from \eqref{eq:new-38} that
\begin{align}
  (1 - \epsilon/2)(3 + 2\alpha) =
  \left(\frac{2 - \epsilon}{2}\right)
  \left(3 + 2\alpha\right)
  \le 1.
\end{align}
Simplifying yields the constraint
\begin{align}
  2 \alpha
  \;\le\;
  \frac{2}{2- \epsilon} - 3
  \;=\;
  \frac{3\epsilon - 4}{2 - \epsilon},
\end{align}
and so to require $C(n)^{3+2\alpha} = o(n)$ means we must have
\begin{align}
  \alpha \le \frac{3\epsilon - 4}{4 - 2 \epsilon} \; .
  \label{eq:alpha-req} 
\end{align}
Now for $0 < \epsilon < 1$, which corresponds to a value of
$B(n) = O\left(n^{1 - \epsilon}\right)$ which grows with $n$,
we can verify that the right hand side of
\eqref{eq:alpha-req} is always negative.
So to obtain the desired error term in
\eqref{eq:38-goal} constrains $\alpha$ to be negative, which
contradicts the requirement of $\alpha > 0$ used in earlier
parts of the proof of Lemma 38.

So in the case where the joint distribution $(X, Y)$
is formed by two independent random variables, using the corrected
error bound from Lemma 37 in \eqref{eq:37-error-new}
renders the proof of Lemma 38 incorrect.
Thus for general joint distributions $(X, Y)$, we should
not expect the current technique in the proof of Lemma 38 to work.


%
%

\section{New Consistency Proof}
\label{sec:consistency}

We now outline an alternative approach to replace Lemmas 35-38
in \cite[Appendix A]{reshef16-mice}, which are needed to
prove the consistency of the MIC estimator in Theorem 6.

\subsection{Overview of Argument}

Our main goal is to prove a statement similar to Lemma 38 of \cite{reshef16-mice},
which probabilistically bounds the difference between corresponding
entries of $M$ and $\widehat M$:

\begin{goal}
  \label{goal:informal}
  We want to show that there exists a function
  $B(n)$ that grows with $n$ such that,
  for every joint distribution $(X, Y)$ and $n$,
  if $D_n$ is a sample of $n$ points drawn
  iid from $(X, Y)$, then 
  \begin{align*}
    | M_{k,\ell} - \widehat M_{k, \ell} | = o(1)
  \end{align*}
  holds simultaneously for all $k\ell \le B(n)$
  with probability at least $1 - o(1)$
  (where the randomness is over the sampling that
  determines $D_n$ and the asymptotics are defined
  wrt increasing $n$).
\end{goal}

If we obtain Goal~\ref{goal:informal}, then the proof
of Theorem 6 \cite[Appendix A]{reshef16-mice} (which shows
the consistency of the MIC estimator and relies
on obtaining Goal~\ref{goal:informal}) can remain unmodified.

\subsubsection{Proof Sketch of Goal~\ref{goal:informal}}

Recall that for a fixed $(k,\ell)$ pair (and assuming wlog $k \le \ell$)
we have
\begin{align}
  | M_{k, \ell} - \widehat M_{k, \ell} |
  &=
    \left|
    \; \max_{G: \; k \times \ell} \; \frac{I( (X, Y)|_G)}{\log_2 k}
    \;-\;
    \max_{G: \; k \times \ell} \; \frac{I( D_n|_G)}{\log_2 k}
    \;\right| \nonumber \\
  &=
    \frac{1}{\log_2 k} \cdot
    \left| \;
    \max_{G: \; k \times \ell} \; I( (X, Y)|_G)
    \;-\;
    \max_{G: \; k \times \ell} \; I( D_n|_G)
    \; \right| \nonumber \\
  &\le
    \max_{G: \; k \times \ell} \;
    \left| \;
    I((X, Y)|_G) \;-\; I(D_n|_G)
    \; \right| . \label{eq:goal1-max}
\end{align}

In other words, expression \eqref{eq:goal1-max} shows that
to bound the difference between $M_{k, \ell}$ and $\widehat M_{k, \ell}$,
it is sufficient to bound the maximum difference in mutual information
between the discrete distributions $(X, Y)|_G$ and $D_n|_G$ for
a grid $G$ of size at most $k \times \ell$.

So our strategy for Goal~\ref{goal:informal} is to first obtain
such a bound on expression \eqref{eq:goal1-max} for a fixed
$(k, \ell)$ that holds with probability at least $1 - p$,
where $p = o(1/ B^2(n))$.
Then by taking a union bound over all $k\ell \le B(n)$
(which is at most $B^2(n)$ pairs) 
and by choosing the function $B(n)$ appropriately, 
the statement of Goal~\ref{goal:informal} will hold with
total probability at least $1 - o(1)$. 

The purpose of the original Lemmas 35-37 in \cite{reshef16-mice}
is to bound this maximum difference in mutual information
from \eqref{eq:goal1-max} for $k \times \ell$ grids,
but here we will circumvent Lemma 37 and obtain our goal 
by adapting the original argument of Lemma 36. 
The result is a probabilistic $o(1)$ bound on
the expression \eqref{eq:goal1-max}, but we note that the
bound only holds for $k\ell \le B(n) = O(n^\alpha)$ where $0 < \alpha < 0.5$.
This is a slightly weaker guarantee compared to
the original statement of Lemma 38 and Theorem 6, which held for
$k\ell \le B(n) = O(n^\alpha)$ for $0 < \alpha < 1$.


We first state the following variant of
Lemma 36 from \cite[Appendix A]{reshef16-mice}, which follows directly
from the original proof of the lemma.

\begin{lemma}[(Variant of {\cite[Lemma 36]{reshef16-mice}})] 
  \label{lemma:36-variant}
  \hspace*{1em}
  
  \begin{itemize}
  \item
    Let $\Pi$ and $\Psi$ be random variables.
  \item
    Let $\Gamma$ be a grid with $C$ cells.
  \item
    Let $G$ be any grid with $\beta < C$ cells.
  \item
    Let $\delta$ (resp. $d$) be the total probability mass of
    $\Pi|_\Gamma$ (resp. $\Psi|_\Gamma$)
    falling in cells of $\Gamma$ that are not contained in
    individual cells of $G$.
  \item
    Let $G'$ be a sub-grid of $\Gamma$ of $\beta$ cells obtained by
    replacing every horizontal or vertical line in $G$ that is
    not in $\Gamma$ with a closest line in $\Gamma$.
  \end{itemize}
  Then
  \begin{align}
    | I(\Pi|_G) - I(\Psi|_G) |
    \;\le\;
    &\;O\left(
      \delta \log_2
      \left( \beta/\delta\right) 
      \right) \; + \label{eq:36-delta-bound} \\
    &\;O\left(
      d \log_2
      \left(\beta/d\right)
      \right) \; + \label{eq:36-d-bound} \\
    &\;| I(\Pi|_{G'}) - I(\Psi|_{G'})| . \label{eq:36-dtv-bound}
  \end{align}  
\end{lemma}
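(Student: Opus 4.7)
The plan is to apply the triangle inequality and insert the intermediate distributions $\Pi|_{G'}$ and $\Psi|_{G'}$, writing
\begin{align*}
|I(\Pi|_G) - I(\Psi|_G)|
&\le |I(\Pi|_G) - I(\Pi|_{G'})|
+ |I(\Pi|_{G'}) - I(\Psi|_{G'})| \\
&\quad + |I(\Psi|_{G'}) - I(\Psi|_G)| .
\end{align*}
The middle term on the right is exactly \eqref{eq:36-dtv-bound}, so the remaining task is to bound each outer term by a quantity of the form appearing in \eqref{eq:36-delta-bound} and \eqref{eq:36-d-bound} respectively.

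For the first outer term, I would start by bounding the total variation distance $D_{\text{TV}}(\Pi|_G, \Pi|_{G'})$ by $\delta$. By construction $G'$ is obtained by snapping every line of $G$ that is not already in $\Gamma$ to a closest $\Gamma$-line, so the only $\Gamma$-cells whose probability mass is redistributed when passing from $\Pi|_G$ to $\Pi|_{G'}$ are precisely the $\Gamma$-cells that fail to be contained in a single cell of $G$; these carry total $\Pi$-mass exactly $\delta$ by hypothesis. The identical argument applied to $\Psi$ gives $D_{\text{TV}}(\Psi|_G, \Psi|_{G'}) \le d$.

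Next, I would invoke a continuity estimate for mutual information: if two joint distributions $P$ and $Q$ supported on a table with at most $\beta$ cells satisfy $D_{\text{TV}}(P, Q) \le \tau$ for sufficiently small $\tau$, then $|I(P) - I(Q)| = O(\tau \log_2(\beta/\tau))$. This estimate is obtained by applying a Fannes-type entropy continuity bound (giving $|H(P) - H(Q)| = O(\tau \log_2(\beta/\tau))$) to the joint distribution and to each of its two marginals, and combining via $I = H(X) + H(Y) - H(X, Y)$; marginalization only decreases total variation, so a single $\tau$ controls all three terms. Applying this with $P = \Pi|_G$, $Q = \Pi|_{G'}$, and $\tau = \delta$ yields \eqref{eq:36-delta-bound}, and the analogous application with $\tau = d$ yields \eqref{eq:36-d-bound}, completing the bound.

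The main obstacle I anticipate is handling the regimes where $\delta$ or $d$ is either very small relative to $1/\beta$ or close to a constant, since raw Fannes-type inequalities carry a binary-entropy correction term that must be absorbed cleanly into the claimed $O(\tau \log_2(\beta/\tau))$ form. I would resolve this by invoking exactly the continuity estimate already used inside the proof of the original Lemma 36 in \cite{reshef16-mice}, whose regime of applicability covers the parameter range $\delta, d \in (0, 1)$ faced here. A minor bookkeeping check is that the snapping procedure yields a grid $G'$ with the claimed $\beta$ cells; if ties cause distinct lines to merge, the bound only improves, since fewer cells reduce the logarithmic factor.
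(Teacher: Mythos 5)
Your proof is correct, and it is essentially the approach the paper intends. Note that the paper does not itself prove this variant lemma; it states only that the result ``follows directly from the original proof of the lemma'' in Reshef et al. Your reconstruction --- inserting the snapped grid $G'$ via the triangle inequality, bounding $D_{\text{TV}}(\Pi|_G, \Pi|_{G'}) \le \delta$ and $D_{\text{TV}}(\Psi|_G, \Psi|_{G'}) \le d$ by observing that a sample point can change its $(i,j)$-index under the snapping map only if it lies in a $\Gamma$-cell split by $G$, and then applying a Proposition-40-style continuity estimate to the two outer terms --- is the natural mechanism behind that claim and matches the structure of the cited argument. The edge case you flag (large $\delta$ or $d$, where the continuity estimate's hypothesis $D_{\text{TV}} \le 1/4$ may fail) is handled by the trivial bound $|I(\Pi|_G) - I(\Pi|_{G'})| \le \log_2 \min\{k,\ell\} \le \tfrac{1}{2}\log_2\beta$, since once $\delta \ge 1/4$ one has $\delta \log_2(\beta/\delta) \ge \tfrac{1}{4}\log_2\beta$ (as $\log_2\delta \le 0$), so \eqref{eq:36-delta-bound} already dominates with a universal constant; the same holds for \eqref{eq:36-d-bound}.
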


To apply this lemma, we will suppose $\Pi = (X, Y)$ and (by slight abuse of notation)
$\Psi = D_n$, and we consider any $k \times \ell$ grid $G$ where 
$k \ell \le B(n) = O(n^\alpha)$ for some $\alpha > 0$.
We will set $\Gamma$ to be an \textit{equipartition} of $\Pi$ into
$kn^{\epsilon}$ rows and $\ell n^{\epsilon}$ columns for any $\epsilon > 0$.

With these settings, we obtain a probabilistic bound on
$|I(\Pi|_G) - I(\Psi|_G)|$ that holds for every
$k \times \ell$ grid $G$ \textit{simultaneously}
(where the probability is over the randomness of the sampled points $D_n$)
by deriving probabilistic bounds on
\eqref{eq:36-delta-bound}, \eqref{eq:36-d-bound}, and \eqref{eq:36-dtv-bound}
separately and applying a union bound. 

Stated formally:

\begin{lemma}
  \label{lemma:three-bounds}
  Let $(X, Y)$ be a pair of jointly-distributed random variables
  and let $D_n$ be a dataset of $n$ points sampled iid from $(X, Y)$.
  For any $\alpha > 0$ and any $n$, consider any pair $(k, \ell)$
  where $\beta = k\ell \le B(n) = O(n^\alpha)$. 
  Let $\Gamma$ be an equipartition of $(X, Y)$ into 
  $k n^{\epsilon}$ rows, $\ell n^{\epsilon}$ columns, and
  $C(n) = k\ell n^{2\epsilon}$ total cells for any $\epsilon > 0$.
  For any grid $G$, let $G'$ be a grid of equal size as defined
  in Lemma~\ref{lemma:36-variant}. 

  Then the following probabilistic bounds hold simultaneously for
  every $k \times \ell$ grid $G$:
  \begin{enumerate}
  \item
    with probability 1:
    \begin{equation}
      \delta \le \frac{2}{n^\epsilon}
      \quad
      \implies
      \quad
      O\left(\delta \log_2 (\beta/\delta)\right)
      =
      O\left(
        \frac{\log_2n}{n^{\epsilon}}
      \right)
      \label{eq:3b-delta}
    \end{equation}
  \item
    with probability at least 
    $1 - p_d$ where $p_d := O(n^{\alpha}) \cdot e^{-\Omega(n^{1-\epsilon-\alpha})}$:
    \begin{equation}
      d \le \frac{4}{n^\epsilon}
      \quad
      \implies
      \quad
      O\left(d \log_2 (\beta/d)\right)
      =
      O\left(
        \frac{\log_2 n}{n^{\epsilon}}
      \right)
      \label{eq:3b-d}
    \end{equation}
  \item
    with probability at least
    $1 - p_{G'}$ where 
    $p_{G'} := O(n^{\alpha+2\epsilon - 3})$:
    \begin{equation}
      |I((X, Y)|_{G'}) - I(D_n|_{G'})|
      \;=\;
      O\left(
        \phi \log_2 \left(\frac{n^{\alpha}}{\phi}\right)
      \right)
      \label{eq:3b-mi}
    \end{equation}
    where
    \begin{equation}
      \phi = O\left(
        \frac{n^{\alpha + 2\epsilon} \cdot \log_2^{0.5} n}{n^{0.5}}
      \right) .
      \label{eq:3b-phi}
    \end{equation}
  \end{enumerate}
\end{lemma}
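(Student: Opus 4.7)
The plan is to handle the three bounds separately, each reduced to a concentration statement about a fixed family of rows, columns, or cells of the deterministic equipartition $\Gamma$, so that uniformity in $G$ follows combinatorially. Part 1 is purely deterministic: because $\Gamma$ is an equipartition of $(X,Y)$, every row of $\Gamma$ has mass exactly $1/(kn^\epsilon)$ and every column has mass $1/(\ell n^\epsilon)$, while any $k\times\ell$ grid $G$ has only $k-1$ horizontal and $\ell-1$ vertical lines, each lying inside a single row (resp.\ column) of $\Gamma$. Hence the cells of $\Gamma$ split by $G$ are contained in the union of at most $k-1$ rows and $\ell-1$ columns, giving $\delta \le 2/n^\epsilon$. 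The claimed bound $O((\log_2 n)/n^\epsilon)$ then follows from $\beta \le n^\alpha$ and the monotonicity of $x \mapsto x\log_2(c/x)$ for small $x$.

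For Part 2, I would avoid union-bounding over the combinatorially many possible cut-sets $S_G$ and instead union-bound per-row and per-column concentration against the fixed row and column structure of $\Gamma$. For each row $i$ of $\Gamma$, the empirical mass $\hat m_i$ satisfies $n\hat m_i \sim \mathrm{Bin}(n, 1/(kn^\epsilon))$, so multiplicative Chernoff gives $\Pr[\hat m_i > 2/(kn^\epsilon)] \le \exp(-\Omega(n^{1-\epsilon}/k))$, and symmetrically for columns. A union bound across all $O(n^{\alpha+\epsilon})$ rows and columns of $\Gamma$ (using $k,\ell \le n^\alpha$) yields a ``good'' event of the form claimed for $p_d$ on which every row and column empirical mass sits within a factor of two of its mean. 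On this event, every $G$ satisfies $d \le \sum_{i \in R}\hat m_i + \sum_{j \in C}\hat m'_j \le 4/n^\epsilon$ simultaneously since only $k-1$ rows and $\ell-1$ columns of $\Gamma$ are cut. The conversion to $O((\log_2 n)/n^\epsilon)$ then follows as in Part 1.

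Part 3 is the principal technical step and is where the strength of the final result is set. The plan is to apply Hoeffding's inequality to each of the $C(n) = k\ell n^{2\epsilon}$ cells of $\Gamma$ at deviation $t = \Theta(\sqrt{\log n / n})$, with the constant inside $t$ tuned so that the cell-wise union bound produces failure probability matching the stated $p_{G'} = O(n^{\alpha+2\epsilon-3})$. On the resulting good event, $\sum_{(i,j)}|\psi_{i,j}-\pi_{i,j}| \le C(n)\cdot t = O(\phi)$. Because the cells of $G'$ are unions of cells of $\Gamma$, the $\ell_1$ distance between $(X,Y)|_{G'}$ and $D_n|_{G'}$ is bounded above by the same $\phi$, and I would then invoke the standard continuity of mutual information in total variation distance (the Fano-type bound equivalent to Lemma 35 of \cite{reshef16-mice}) to convert this $\ell_1$ bound into the target $O(\phi \log_2(n^\alpha/\phi))$ estimate. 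The main obstacle is not any single step but the parameter bookkeeping: the $\phi$ scaling is $n^{\alpha+2\epsilon-0.5}\sqrt{\log n}$, so for $\phi$ to be $o(1)$ one needs $\alpha+2\epsilon < 1/2$, which is ultimately the source of the restriction to $\alpha < 1/2$ in the downstream application of the lemma toward Goal~\ref{goal:informal}.
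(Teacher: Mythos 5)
Your proposal is correct, and for Parts 1 and 2 it follows essentially the same route as the paper: the deterministic row/column mass accounting for $\delta$, and per-row/per-column multiplicative Chernoff bounds for $d$. (Your choice to union-bound over \emph{all} $O(n^{\alpha+\epsilon})$ rows and columns of $\Gamma$ rather than only the cut rows and columns of a particular $G$ is in fact the cleaner way to get simultaneity over all grids $G$, and the extra polynomial factor is absorbed by the exponential, so you still land on the stated $p_d$.) Part 3 is where you genuinely diverge. The paper splits the cells of $\Gamma$ into \emph{large} ($\pi_{ij} > 9\log_2 n/n$) and \emph{small} cells: it applies a multiplicative Chernoff bound with $t = 3\sqrt{\log_2 n/(n\pi_{ij})}$ to the large cells (the split exists precisely because the relative-deviation bound requires $t<1$, i.e.\ $\pi_{ij}$ not too small), bounds the small-cell $\pi$-mass deterministically, and controls the small-cell $\psi$-mass with a separate upper-tail Chernoff bound. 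You instead apply Hoeffding's additive inequality uniformly to every cell at deviation $t=\Theta(\sqrt{\log n/n})$, which is insensitive to $\pi_{ij}$ and so needs no case split; a single union bound over the $C(n)$ cells gives $\sum_{(i,j)}|\psi_{i,j}-\pi_{i,j}| \le C(n)\cdot t$, which matches the paper's $\phi$ and $p_{G'}$ exactly. Your route is shorter and arguably more transparent --- it sidesteps the small-probability-cell issue that is the very source of the error in Reshef et al.'s Lemma 37 --- while the paper's large/small decomposition yields the per-cell refinement $|\pi_{i,j}-\psi_{i,j}| = O(\sqrt{\pi_{i,j}\log_2 n/n})$, which is sharper cell-by-cell but gives no improvement after summing. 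The remaining steps (monotonicity of $\ell_1$ distance under coarsening to $G'$, then the continuity of mutual information in total variation) coincide with the paper's use of its Proposition~\ref{prop:40}, and your closing observation that $\alpha+2\epsilon<1/2$ is what forces $\alpha<1/2$ downstream is exactly the constraint used in Corollary~\ref{corr:mutual-info-bound}.
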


Granting Lemma~\ref{lemma:three-bounds} as true and applying
Lemma~\ref{lemma:36-variant}, we have
the following corollary that results in a bound on our original
target expression \eqref{eq:goal1-max}:

\begin{corollary}
  \label{corr:mutual-info-bound}

  Let $(X, Y)$ be a pair of jointly-distributed random variables
  and let $D_n$ be a dataset of $n$ points sampled iid from
  $(X, Y)$, and let $p_d$ and $p_G'$ be defined as in Lemma~\ref{lemma:three-bounds}.
    
  For every $0 < \alpha < 0.5$,
  there exists some $u > 0$ such that
  for all $n$ and for all $k\ell \le B(n) = O(n^\alpha)$:
  \begin{equation}
    |I((X, Y)|_G) - I(D_n|_G) |
    \;=\;
    O\left(\frac{1}{n^u}\right)
    \label{eq:corr-mi}
  \end{equation}
  holds for every $k \times \ell$ grid $G$ simultaneously
  with probability at least
  \begin{equation}
    1 - (p_d + p_{G'}) \ge 1 - O(n^{-2.5}).
    \label{eq:corr-prob}
  \end{equation}  
\end{corollary}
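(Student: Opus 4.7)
The plan is a direct reduction: apply Lemma~\ref{lemma:36-variant} to decompose the mutual-information gap into the three pieces \eqref{eq:36-delta-bound}--\eqref{eq:36-dtv-bound}, bound each piece via the corresponding item of Lemma~\ref{lemma:three-bounds}, and combine by a union bound on the two failure events. The one point that requires care is the choice of the auxiliary parameter $\epsilon > 0$ as a function of $\alpha$, so that each term decays polynomially in $n$ while the total failure probability stays at most $O(n^{-2.5})$.

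Concretely, I would first fix $0 < \alpha < 0.5$ and choose $\epsilon > 0$ with $\alpha + 2\epsilon < 0.5$; a clean choice is $\epsilon = (0.5 - \alpha)/4$. For any pair $(k, \ell)$ with $\beta = k\ell \le B(n) = O(n^\alpha)$ and any $k \times \ell$ grid $G$, I would take $\Gamma$ to be the equipartition of $(X, Y)$ with $k n^\epsilon$ rows and $\ell n^\epsilon$ columns, let $G'$ be the sub-grid associated to $G$ by Lemma~\ref{lemma:36-variant}, and invoke that lemma to obtain the three-term upper bound on $|I((X, Y)|_G) - I(D_n|_G)|$.

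I would then substitute the three bounds of Lemma~\ref{lemma:three-bounds}. Term \eqref{eq:3b-delta} is deterministic and yields $O(\log_2 n / n^\epsilon)$; term \eqref{eq:3b-d} yields $O(\log_2 n / n^\epsilon)$ except on an event of probability $p_d$; and term \eqref{eq:3b-mi} yields $O(\phi \log_2(n^\alpha / \phi))$ except on an event of probability $p_{G'}$, where $\phi = O(n^{\alpha + 2\epsilon - 0.5} \log_2^{0.5} n)$. Setting $v := 0.5 - \alpha - 2\epsilon > 0$ gives $\phi = O(n^{-v} \log_2^{0.5} n)$, and hence $\phi \log_2(n^\alpha / \phi) = O(n^{-v} \log_2^{1.5} n)$. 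Picking any $0 < u < \min(\epsilon, v)$ absorbs the polylogarithmic factors, so each of the three terms is $O(n^{-u})$, yielding \eqref{eq:corr-mi}. A union bound over the two failure events produces \eqref{eq:corr-prob}, using that $p_d = O(n^\alpha) \exp(-\Omega(n^{1 - \alpha - \epsilon}))$ is super-polynomially small (since $1 - \alpha - \epsilon > 0.5$) and that $p_{G'} = O(n^{\alpha + 2\epsilon - 3}) = O(n^{-2.5})$ precisely because $\alpha + 2\epsilon \le 0.5$.

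The main obstacle, if a mild one, is that the constraint $\alpha + 2\epsilon < 0.5$ plays a double role: it is what makes the $\phi$-driven term decay polynomially \emph{and} what forces $p_{G'}$ to meet the required $O(n^{-2.5})$ threshold. This coupling is why the corollary requires $\alpha < 0.5$ rather than the $\alpha < 1$ of the original Lemma~38 in \cite{reshef16-mice}, and is the source of the weaker range of $B(n)$ that the new proof yields.
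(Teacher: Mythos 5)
Your proposal is correct and follows essentially the same route as the paper's proof: apply Lemma~\ref{lemma:36-variant} with the equipartition $\Gamma$, bound the three terms via Lemma~\ref{lemma:three-bounds}, choose $\epsilon$ so that $\alpha + 2\epsilon < 0.5$ (the paper states the condition as $\epsilon < 1/4 - \alpha/2$, equivalent to your choice), and union-bound the two failure probabilities. Your treatment is in fact slightly more careful than the paper's in tracking the polylogarithmic factors and in making the exponent $u$ explicit via $\min(\epsilon, v)$.
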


\begin{proof}
  As in the statement of Lemma~\ref{lemma:three-bounds},
  let $\Gamma$ be an equipartition of $(X, Y)$ into
  $kn^\epsilon$ rows and $\ell n^\epsilon$ columns.
  When $0 < \alpha < 0.5$, any choice of $0 < \epsilon < 1/4 - \alpha/2$
  ensures $\alpha + 2\epsilon < 0.5$, which means that
  expressions \eqref{eq:3b-delta}, \eqref{eq:3b-d}, and
  \eqref{eq:3b-mi} from Lemma~\ref{lemma:three-bounds} are
  all $O(n^{-u})$ for some positive constant $u < 0.5 - (\alpha + 2 \epsilon)$.
  So for every $k\ell \le B(n) = O(n^{\alpha})$,
  expression~\eqref{eq:corr-mi} of the corollary follows from applying
  Lemma~\ref{lemma:36-variant} with these three bounds.
  The same setting of $\alpha$ and $\epsilon$ also means that
  $p_{G'} = O(n^{\alpha + 2\epsilon - 3}) = O(n^{-2.5})$ and
  $p_{d} = O(n^{\alpha}) \cdot e^{-\Omega(n^{1-\epsilon-\alpha})} = O(n^{-2.5})$
  (since $p_{d}$ is $o(n^{-c})$ for any $c > 0$), from which
  expression~\eqref{eq:corr-prob} of the corollary follows. 
\end{proof}

We can now use Corollary~\ref{corr:mutual-info-bound} to formally
state the following theorem which achieves our original
Goal~\ref{goal:informal}.

\begin{theorem}
  \label{thm:final-goal1}
  Let $(X, Y)$ be a pair of jointly-distributed random variables
  and let $D_n$ be a dataset of $n$ points sampled iid from
  $(X, Y)$.
  For every $0 < \alpha < 0.5$,
  there exists a constant $u > 0$ such that
  for all $n$:
  \begin{equation}
    | M_{k,\ell} - \widehat M_{k, \ell} | = O\left(\frac{1}{n^u}\right)
    \label{eq:final-kl-bound}
  \end{equation}
  holds for every $(k, \ell)$ pair where
  $k\ell \le B(n) = O(n^\alpha)$ simultaneously
  with probability at least $1 - O(n^{-1.5})$
  (where the randomness is over the sampling that
  determines $D_n$). 
\end{theorem}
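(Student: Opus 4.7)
The plan is to obtain Theorem~\ref{thm:final-goal1} as a straightforward consequence of Corollary~\ref{corr:mutual-info-bound}, combined with the reduction already stated in \eqref{eq:goal1-max} and a union bound over the admissible $(k,\ell)$ pairs. The heavy lifting has been done by Lemma~\ref{lemma:three-bounds} and its corollary; what remains is bookkeeping.

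Fix $0 < \alpha < 0.5$ and consider any pair $(k,\ell)$ with $k, \ell \ge 2$ and $k\ell \le B(n) = O(n^\alpha)$. Using the chain of inequalities \eqref{eq:goal1-max} (where the final step is valid because $\log_2 k \ge 1$), we have $|M_{k,\ell} - \widehat M_{k,\ell}| \le \max_{G:\,k\times\ell} |I((X,Y)|_G) - I(D_n|_G)|$. Corollary~\ref{corr:mutual-info-bound} bounds this maximum by $O(n^{-u})$ for some $u > 0$ that depends only on $\alpha$ (and on a fixed choice of $\epsilon \in (0, 1/4 - \alpha/2)$), uniformly over all $k \times \ell$ grids $G$, with probability at least $1 - O(n^{-2.5})$. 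So for each fixed admissible $(k,\ell)$, the bound \eqref{eq:final-kl-bound} holds with probability at least $1 - O(n^{-2.5})$.

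To upgrade this to a simultaneous statement over all admissible pairs, I would take a union bound. The number of $(k,\ell)$ with $k,\ell \ge 2$ and $k\ell \le B(n)$ is at most $B(n)^2 = O(n^{2\alpha})$ (a cruder bound than the tighter $O(n^\alpha \log n)$ obtained by summing $B(n)/k$, but more than adequate). Since the same constant $u$ works uniformly across these pairs (because its choice in the proof of Corollary~\ref{corr:mutual-info-bound} depends only on the fixed parameters $\alpha$ and $\epsilon$, not on $(k,\ell)$), the total failure probability is at most $O(n^{2\alpha}) \cdot O(n^{-2.5}) = O(n^{2\alpha - 2.5})$. The hypothesis $\alpha < 0.5$ gives $2\alpha - 2.5 < -1.5$, so this is $O(n^{-1.5})$, matching the conclusion.

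There is no real obstacle here: all of the analytic and probabilistic effort is concentrated in Lemma~\ref{lemma:three-bounds} (and through it in Lemma~\ref{lemma:36-variant}). The threshold $\alpha < 0.5$ in the theorem appears only as the condition needed to make the union-bound calculation close, precisely as $2\alpha - 2.5 < -1.5$. The only minor point worth double-checking in writing the proof is that the implicit constants in the $O(n^{-u})$ and $O(n^{-2.5})$ bounds from Corollary~\ref{corr:mutual-info-bound} do not blow up as $(k,\ell)$ ranges over admissible pairs, which is transparent from the corollary's proof because those constants depend only on $\alpha$ and $\epsilon$.
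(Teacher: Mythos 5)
Your proposal is correct and follows essentially the same route as the paper's own proof: reduce to the maximum mutual-information gap via \eqref{eq:goal1-max}, invoke Corollary~\ref{corr:mutual-info-bound} for each fixed $(k,\ell)$, and union-bound over the $O(n^{2\alpha})$ admissible pairs to get failure probability $O(n^{2\alpha-2.5}) = O(n^{-1.5})$. Your added remarks on the uniformity of $u$ across pairs are a sensible (if minor) clarification beyond what the paper writes explicitly.
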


\begin{proof}
  Recall expression~\eqref{eq:goal1-max}, which says
  that
  \begin{align*}
    | M_{k, \ell} - \widehat M_{k, \ell} |
     \;\le\;
      \max_{G: \; k \times \ell} \;
      \left| \;
      I((X, Y)|_G) \;-\; I(D_n|_G)
      \; \right|
  \end{align*}
  for a fixed pair $(k, \ell)$.
  Then by Corollary~\ref{corr:mutual-info-bound}, for every $0 < \alpha < 0.5$
  and every $n$, 
  there exists some $u > 0$ such that the right hand side of this expression
  is $O(n^{-u})$ with probability at least $1 - O(n^{-2.5})$
  for every $(k, \ell)$ where $k\ell \le B(n) = O(n^\alpha)$.
  Given that there are at most $O(n^{2\alpha})$ pairs satisfying
  $k\ell \le O(n^\alpha)$, it follows from a union bound that
  $|M_{k, \ell} - \widehat M_{k, \ell}| = O(n^{-u})$ for
  \textit{all} such $(k, \ell)$ pairs simultaneously with probability
  $1 - O(n^{2\alpha - 2.5}) \ge 1 - O(n^{-1.5})$. \qedhere
\end{proof}

This gives us the desired result of Goal~\ref{goal:informal},
and it now remains to prove the three parts of
Lemma~\ref{lemma:three-bounds}.


%
%

\subsection{Lemma~\ref{lemma:three-bounds} Proof: Parts 1 and 2}

Recall that $\Pi = (X,Y)$, $\Psi = D_n$, $\Gamma$ is an equiparition
of $\Pi$ with $kn^{\epsilon}$ rows and $\ell n^{\epsilon}$ columns,
and $G$ is a $k \times \ell$ grid where $k\ell \le B(n) = O(n^\alpha)$
for some $\alpha > 0$.
We define $\delta$ (resp. $d$) to be the total mass of
$\Pi|_\Gamma$ (resp. $\Psi|_\Gamma$)
falling in cells of $\Gamma$ that are not contained in
individual cells of $G$.

We will prove parts 1 and 2 of Lemma~\ref{lemma:three-bounds}
together, which say that:
\begin{enumerate}
\item
  $\delta \le (2 / n^{\epsilon})$ with probability 1.
\item
  $d \le (4/ n^{\epsilon})$ with probability at least
  $1 -  p_d$, where $p_d :=  O(n^\alpha) \cdot e^{- \Omega(n^{1-\epsilon-\alpha})}$.
\end{enumerate}
Our strategy will be to bound $\delta$ by $\delta '$, and then
to show $d \le 2\delta'$ with probability all but $p_d$. 

\subsubsection*{Chernoff Bounds \cite[Chapter 4]{mmbook}}

First, we (re)state two standard Chernoff bounds that will be used
in this section and the next:

Let $X = X_1 + \dots X_n$, where each $X_i$ is an iid Bernoulli RV
with $\E[X_i] = \mu$.
\begin{enumerate}[i]
\item
  \textit{Two-sided tail bound}: for any $0 < t < 1$:
  \begin{align}
    \Pr\left[ \;
    | X - n\mu | \ge t \cdot n\mu
    \; \right]
    \;\le\;
    2 \cdot \exp\left(
    - \frac{n\mu t^2}{3}
    \right) \label{eq:chernoff-double}
  \end{align}
\item
  \textit{Upper tail bound}: for any $0 < t \le 1$ and
  $\hat \mu \ge \mu = \E[X_i]$: 
  \begin{align}
    \Pr\left[ \;
     X \ge (1 + t) \cdot n \hat \mu
    \; \right]
    \;\le\;
    \exp\left(
    - \frac{n\hat\mu t^2}{3}
    \right) \label{eq:upper}
  \end{align}
\end{enumerate}

\subsubsection*{Bound on $\delta$}

By definition, $\delta$ is the sum of mass in
a subset of columns and rows of $\Pi|_\Gamma$.
Let $\pi_{i, j}$ denote the pmf at cell $(i, j)$ of
$\Pi|_\Gamma$, let $\pi_{i, *}$ denote the
total mass of $\Pi|_\Gamma$ in row $i$, and let
$\pi_{*, j}$ denote the total mass of $\Pi|_\Gamma$ in column $j$.
So
\begin{align*}
  \pi_{*, j} &= \sum_{i} \pi_{i, j} = \frac{1}{\ell n^{\epsilon}} \\
  \pi_{i, *} &= \sum_{j} \pi_{i, j} = \frac{1}{kn^{\epsilon}}
\end{align*}
by the definition of $\Gamma$ as an equipartition of $\Pi$.
Now let $K$ be the column indices of $\Gamma$ containing
a column separator of $G$, and let $R$ be the row indices
of $\Gamma$ containing a row separator of $G$.
Since $G$ is a $k \times \ell$ grid, we must have
$|K| \le \ell$ and $|R| \le k$. Then (with probability 1):
\begin{align*}
  \delta
  &\le \sum_{j \in K} \pi_{*, j} + \sum_{i \in R} \pi_{i, *} \\
  &\le \ell \cdot \pi_{*, j} + k \cdot \pi_{i, *} 
    \;=\; \frac{\ell}{\ell n^{\epsilon}} + \frac{k}{kn^{\epsilon}}
    \;=\; \frac{2}{n^\epsilon}. 
\end{align*}

\subsubsection*{Bound on $d$ with probability $1 - p_d$}

Again by definition, $d$ is the sum of mass in a subset
of columns and rows of $\Psi|_\Gamma = D_n|_\Gamma$.
We let $\psi_{i, j}$ denote the pmf at cell $(i, j)$
of $\Psi|_\Gamma$, and we define $\psi_{*, j}$ and
$\psi_{i, *}$ analogously to $\pi_{*, j}$ and $\pi_{i, *}$. 
We will show that each $\psi_{*, j} \le 2 \pi_{*, j}$
(respectively $\psi_{i, *} \le 2 \pi_{i, *}$)
probabilistically.

Observe that $n \cdot \psi_{*, j}$ is a sum of $n$ iid Bernoullis,
each with mean $\pi_{*, j}$. So
\begin{equation}
  \E[n \cdot \psi_{*,j}]
  \;=\; n \cdot \pi_{*, j}
  \;=\; \frac{n}{\ell n^{\epsilon}}
  \;=\; \frac{n^{1-\epsilon}}{\ell}.
\end{equation}
Then by the Chernoff bound \eqref{eq:upper}:
\begin{align*}
  \Pr\left[
  n \cdot \psi_{*, j}
  \ge
  2 \cdot (n^{1-\epsilon}/\ell)
  \right]
  &\le
    \exp\left(
    - \frac{n^{1-\epsilon}}{3\ell}
    \right) \\
  &\le
    \exp\left(
    - \Omega\left(n^{1-\epsilon-\alpha}\right)
    \right), 
\end{align*}
where the final inequality is due to  $\ell = O(n^\alpha)$,
which follows from the assumption that $k \ell = O(n^\alpha)$. 
So for each $j \in K$ we have $\psi_{*, j} \le 2 \pi_{*, j}$
with probability all but $e^{- \Omega(n^{1-\epsilon - \alpha})}$.
A similar calculation shows that
$\psi_{i, *} \le 2 \pi_{i, *}$ for each $i \in R$
with probability all but $e^{- \Omega(n^{1-\epsilon - \alpha})}$.

Combining the two inequalities and taking a union bound shows
\begin{align*}
  d
  &\le \sum_{j \in K} \psi_{*, j} + \sum_{i \in R} \psi_{i, *} \\
  &\le \sum_{j \in K} 2 \pi_{*, j} + \sum_{i \in R} 2 \pi_{i, *}
  \;\le\; 2 \left(\frac{2}{n^{\epsilon}}\right)
    \;=\; \frac{4}{n^\epsilon}
\end{align*}
with probability all but 
$p_d := O(n^\alpha)\cdot e^{-\Omega(n^{1-\epsilon- \alpha})}$,
since $|K| + |R| \le k + \ell \le k\ell \le O(n^\alpha)$
and by using the bound on $\delta$ previously established.


%
%

\subsection{Lemma~\ref{lemma:three-bounds} Proof: Part 3}

Recall that given the grid $\Gamma$ (which is an equipartition
of $(X, Y)$ into $kn^\epsilon$ rows and $\ell n^\epsilon$ columns)
and the $k \times \ell$ grid $G$, the grid $G'$ is a $k\times\ell$
sub-grid of $\Gamma$ obtained by replacing every horizontal or
vertical line in $G$ that is not in $\Gamma$ with a closet line in $\Gamma$.

To prove an upper bound on the quantity
$| I((X, Y)|_{G'}) - I(D_n|_{G'}) |$, we will use
Proposition 40 from Appendix B of 
\cite{reshef16-mice}, which relates the 
statistical distance between two discrete distributions
to their change in mutual information:

\begin{proposition}[({\cite[Proposition 40, Appendix B]{reshef16-mice}})]
  \label{prop:40}
  Let $\Pi$ and $\Psi$ be discrete distributions over $k \times \ell$ grids. 
  If $D_{TV}(\Pi, \Psi) \le \delta$ for any $0 < \delta \le 1/4$, then
  \begin{align*}
  |I(\Pi) - I(\Psi)|
  \le
    O\left(
    \delta \log_2 \left(\frac{\min\{k, l\}}{\delta}\right)
    \right) . 
  \end{align*}
\end{proposition}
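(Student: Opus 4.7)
The plan is to prove Proposition~\ref{prop:40} by reducing the mutual-information gap to a sum of entropy gaps and then invoking Fannes-type continuity bounds in a way that is sensitive to the \emph{smaller} of the two alphabet sizes. A naive bound on each of $H(\Pi_X)$, $H(\Pi_Y)$, and $H(\Pi)$ via Fannes would give a factor $\log_2(k\ell)$ inside the logarithm, which is too weak; the trick is to exploit the $\min\{k,\ell\}$ structure by grouping the entropy terms appropriately before applying continuity.

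Concretely, I would first assume without loss of generality that $k \le \ell$, so $\min\{k,\ell\} = k$, and rewrite the mutual-information difference as
\begin{align*}
I(\Pi) - I(\Psi) = \bigl[H(\Pi_X) - H(\Psi_X)\bigr] - \bigl[H(\Pi_X \mid \Pi_Y) - H(\Psi_X \mid \Psi_Y)\bigr],
\end{align*}
where $\Pi_X$ and $\Pi_Y$ denote the row and column marginals of $\Pi$ (and similarly for $\Psi$). Both surviving quantities live entirely on the smaller alphabet of size $k$, which is exactly what will allow the final bound to scale with $\log_2 k$. For the first term I would use that marginalization is a deterministic coarsening, so $D_{TV}(\Pi_X, \Psi_X) \le D_{TV}(\Pi, \Psi) \le \delta$, and then apply the standard Fannes inequality on an alphabet of size $k$ to conclude $|H(\Pi_X) - H(\Psi_X)| = O(\delta \log_2(k/\delta))$ for $\delta \le 1/4$, which is already of the desired form.

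The main obstacle is the conditional-entropy term, since a direct expansion $H(X \mid Y) = H(X,Y) - H(Y)$ followed by Fannes on each piece introduces a $\log_2(k\ell)$ factor and spoils the bound. To avoid this I would invoke the classical (Alicki-Fannes-Winter style) continuity bound for conditional entropy, which says that whenever $D_{TV}(\Pi,\Psi) \le \delta$ one has $|H(X \mid Y)_\Pi - H(X \mid Y)_\Psi| \le O(\delta \log_2 k) + h(\delta)$, with $h(\cdot)$ the binary entropy. A standard maximal-coupling proof writes $\Psi$ as a mixture $(1-\delta)\Pi + \delta R$ up to signed residuals, introduces the indicator $Z$ of the disagreement event in the coupling, and expands the conditional entropy via the chain rule so that the only potentially large contribution is $\Pr[Z=1] \cdot H(X \mid Y, Z=1) \le \delta \log_2 k$, with the remaining pieces absorbed into $h(\delta)$.

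Combining the marginal and conditional bounds via the triangle inequality then yields $|I(\Pi) - I(\Psi)| = O(\delta \log_2(k/\delta)) + h(\delta)$, and since for $\delta \le 1/4$ the binary entropy $h(\delta) = O(\delta \log_2(1/\delta))$ is absorbed into the main term, this collapses to $O(\delta \log_2(\min\{k,\ell\}/\delta))$, as claimed. The bulk of the technical work sits in establishing the conditional-entropy continuity estimate; once that lemma is in hand, the remainder of the argument is bookkeeping with the triangle inequality and the data-processing property of total variation.
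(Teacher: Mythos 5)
The paper does not actually prove this proposition: it is quoted verbatim from Appendix~B of Reshef et al.\ \cite{reshef16-mice} and used as a black box in the proof of Lemma~\ref{lemma:dtv-bound}, so there is no internal proof to compare against. Your argument is a legitimate self-contained derivation, and its key structural point --- writing $I = H(X) - H(X\mid Y)$ with $X$ the side of size $k=\min\{k,\ell\}$, so that both the marginal and the conditional term live on the smaller alphabet --- is exactly what explains why the bound scales with $\log_2(\min\{k,\ell\}/\delta)$ rather than $\log_2(k\ell/\delta)$. The individual steps check out: total variation is non-increasing under marginalization, the Fannes/$L_1$ entropy bound on a $k$-symbol alphabet needs $\|\Pi_X-\Psi_X\|_1 \le 1/2$, which is precisely what $D_{TV}\le\delta\le 1/4$ supplies, and the conditional-entropy continuity estimate $|H(X\mid Y)_\Pi - H(X\mid Y)_\Psi| \le \delta\log_2 k + h(\delta)$ does hold; a clean way to finish your sketch is the maximal-coupling version: with $Z$ the disagreement indicator, $H(X\mid Y) \le H(Z\mid Y) + H(X\mid Y,Z) \le h(\delta) + (1-\delta)H(X\mid Y,Z=0) + \delta\log_2 k$, while $H(X'\mid Y') \ge H(X'\mid Y',Z) \ge (1-\delta)H(X\mid Y,Z=0)$ since the two laws agree on $\{Z=0\}$; subtracting and running the argument in both directions gives the two-sided bound. (Your prose currently mixes the Alicki--Fannes mixture decomposition with the coupling trick --- either works, but you should commit to one.) Finally, $h(\delta) = O(\delta\log_2(1/\delta))$ for $\delta\le 1/4$ and $k\ge 2$ absorb all lower-order terms into $O(\delta\log_2(k/\delta))$, as you say. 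The trade-off relative to the paper is simply citation versus self-containedness: the paper keeps its note short by importing Reshef et al.'s Proposition 40 with their constants, whereas your route reproves it from standard entropy-continuity tools and makes the $\min\{k,\ell\}$ dependence transparent.
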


Because $D_{TV}(\Pi, \Psi) = \tfrac{1}{2} \| \Pi - \Psi \|_1$,
and since $G'$ is a subgrid of $\Gamma$, it follows by the triangle
inequality that
\begin{equation*}
  D_{TV}((X, Y)|_{G'}, D_n|_{G'} )
  \;\le\;
  D_{TV}((X, Y)|_{\Gamma}, D_n|_{\Gamma}).
\end{equation*}

Thus if we obtain a bound $D_{TV}((X, Y)|_{\Gamma}, D_n|_{\Gamma}) \le \phi$,
then applying Proposition~\ref{prop:40} yields
\begin{equation*}
  | I((X, Y)|_{G'}) - I(D_n|_{G'}) |
  = 
  O\left(
    \phi \log_2 \left(\frac{\min\{k, \ell\}}{\phi}\right)
  \right)
  =
  O\left(
    \phi \log_2 \left(\frac{n^{\alpha}}{\phi}\right)
  \right)
\end{equation*}
by the assumption that $k\ell \le B(n) = O(n^{\alpha})$.

So given $(X, Y)$, the dataset $D_n$, and the equipartition $\Gamma$
of $C(n) = k\ell n^{2\epsilon}$ total cells,
we will prove the following bound on
$D_{TV}((X, Y)|_{\Gamma}, D_n|_{\Gamma})$,
which implies Part (3) of Lemma~\ref{lemma:three-bounds}. 

\begin{lemma}
  \label{lemma:dtv-bound}
  Let $(X, Y)$ be a pair of jointly-distributed random variables
  and let $D_n$ be a dataset of $n$ points sampled iid from $(X, Y)$.
  For any $\alpha > 0$ and any $n$, consider any pair $(k, \ell)$
  where $k\ell \le B(n) = O(n^\alpha)$. 
  Let $\Gamma$ be an equipartition of $(X, Y)$ into 
  $k n^{\epsilon}$ rows, $\ell n^{\epsilon}$ columns, and
  $C(n) = k\ell n^{2\epsilon}$ total cells for any $\epsilon > 0$.
  Then
  \begin{equation*}
    D_{TV}((X, Y)|_{\Gamma}, D_n|_{\Gamma})
    =
    O\left(
      \frac{n^{\alpha + 2\epsilon}\cdot \log_2^{0.5}n}{n^{0.5}}
    \right)
  \end{equation*}
  with probability at least $1 - O(n^{\alpha + 2\epsilon - 3})$. 
\end{lemma}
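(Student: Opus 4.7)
The plan is to control $D_{TV}((X,Y)|_\Gamma, D_n|_\Gamma) = \tfrac{1}{2}\sum_{(i,j)}|\psi_{i,j} - \pi_{i,j}|$ by establishing a uniform per-cell concentration bound $|\psi_{i,j} - \pi_{i,j}| \le r$ across all $C(n) = k\ell n^{2\epsilon} = O(n^{\alpha + 2\epsilon})$ cells of $\Gamma$, for a common threshold $r$. On the event that this uniform bound holds, summing the triangle inequality gives $D_{TV} \le C(n) \cdot r/2$, so choosing $r = c\sqrt{\log_2 n / n}$ for a constant $c$ to be fixed below produces the target bound $O(n^{\alpha+2\epsilon}\log_2^{0.5} n / n^{0.5})$.

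For each cell $(i,j)$, recall (as used in Parts 1 and 2) that $n\psi_{i,j}$ is a sum of $n$ iid Bernoullis with mean $\pi_{i,j}$, so I would split into two cases based on the magnitude of $\pi_{i,j}$ relative to $r$. When $\pi_{i,j} \ge r$, I would apply the two-sided multiplicative Chernoff bound \eqref{eq:chernoff-double} with $t = r/\pi_{i,j} \in (0, 1]$, which yields a failure probability of at most $2\exp(-nr^2 / (3\pi_{i,j})) \le 2\exp(-nr^2/3) = 2n^{-c^2/3}$ via $\pi_{i,j} \le 1$. When $\pi_{i,j} < r$, the lower tail $\psi_{i,j} \ge \pi_{i,j} - r$ is automatic (since the right side is non-positive), while the upper tail can be handled via \eqref{eq:upper} applied with $\hat\mu = (\pi_{i,j} + r)/2 \ge \pi_{i,j}$ and $t = 1$, giving $(1+t)\hat\mu = \pi_{i,j} + r$ and a super-polynomially small failure probability $\exp(-n(\pi_{i,j}+r)/6) \le \exp(-\Omega(\sqrt{n \log_2 n}))$, which is dominated by the bound from the first case.

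Taking a union bound over the $C(n) = O(n^{\alpha + 2\epsilon})$ cells then gives total failure probability $O(n^{\alpha + 2\epsilon - c^2/3})$, which matches the claimed $O(n^{\alpha+2\epsilon-3})$ by choosing $c = 3$. Combined with the triangle-sum step, this completes the proof of the lemma. The main subtlety is the case split for cells with very small $\pi_{i,j}$, where the purely multiplicative form of Chernoff degrades because $r/\pi_{i,j}$ exceeds $1$; this is precisely what the upper-tail bound \eqref{eq:upper}, with its freedom to inflate the parameter to any $\hat\mu \ge \pi_{i,j}$, is designed to handle. Beyond this case analysis, the argument is a standard per-cell concentration plus triangle inequality computation, with the $\log_2^{0.5} n$ slack in the target bound providing just enough room to absorb the union bound over all $O(n^{\alpha + 2\epsilon})$ cells.
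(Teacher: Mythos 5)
Your proof is correct. It reaches the same bound as the paper via a slightly different decomposition, so let me compare.

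Like the paper, you split cells into a ``large'' regime (handled by the two-sided multiplicative Chernoff bound \eqref{eq:chernoff-double}) and a ``small'' regime, but you place the split at $\pi_{i,j}\ge r$ (with $r = 3\sqrt{\log_2 n / n}$) rather than at the paper's $\pi_{i,j} > 9\log_2 n/n$, and you insist on a \emph{uniform per-cell} deviation bound $|\psi_{i,j}-\pi_{i,j}|\le r$ holding simultaneously across all $C(n)$ cells. For small cells, the paper does not chase a per-cell bound at all: it bounds $\sum_S\pi_{i,j}$ deterministically and then bounds $\sum_S\psi_{i,j}$ via a single Chernoff bound applied to the \emph{aggregate} count $n\sum_S\psi_{i,j}$. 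You instead handle small cells one at a time, noting the lower tail is trivial since $\psi_{i,j}\ge 0 > \pi_{i,j}-r$, and getting the upper tail $\psi_{i,j}\le\pi_{i,j}+r$ by inflating the Chernoff mean parameter to $\hat\mu = (\pi_{i,j}+r)/2 \ge\pi_{i,j}$ and applying \eqref{eq:upper} with $t=1$, yielding a failure probability $\exp(-n(\pi_{i,j}+r)/6)\le\exp(-\Omega(\sqrt{n\log_2 n}))$ that is dominated by the $O(n^{-3})$ from the large regime. Both arguments then combine per-cell bounds by summing over $C(n)=O(n^{\alpha+2\epsilon})$ cells and union-bounding the failure events, arriving at the identical $D_{TV}$ bound and $O(n^{\alpha+2\epsilon-3})$ failure probability. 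Your version buys a cleaner final statement --- a single uniform deviation bound across every cell, with the $D_{TV}$ bound falling out immediately as $C(n)\cdot r/2$ --- at the cost of a slightly more intricate per-cell case analysis (the $\hat\mu$-inflation trick) where the paper has a one-shot aggregate Chernoff bound. Both are valid, and neither changes the threshold at which the overall argument goes through, so the difference is one of presentation rather than strength.
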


\begin{proof}
  Given $(X, Y)$, a sample $D_n$ of $n$ points drawn iid from $(X, Y)$,
  and the grid $\Gamma$, define the discrete distributions
  \begin{align*}
    \Pi &= (X, Y)|_\Gamma \;\; \text{with pmf} \;\; \pi_{i, j} \\
    \Psi &= D_n|_\Gamma \;\; \text{with pmf} \;\; \psi_{i, j}
  \end{align*}
  where $i \in [kn^{\epsilon}]$ and $j \in [\ell n^{\epsilon}]$
  (note that the use of $\Pi$ and $\Psi$ here differs slightly from
  the previous subsection).
  Also, for every cell $(i, j)$ of $\Gamma$, we say that
  \begin{align*}
    &(i, j) \;\; \text{is \textit{large} if} \;\; \pi_{ij} > \frac{9 \log_2 n}{n}\\
    \text{and}\;\;
    &(i, j) \;\; \text{is \textit{small} if} \;\; \pi_{ij} \le \frac{9 \log_2 n}{n},
  \end{align*}
  and let $L$ and $S$ denote the sets of \textit{large} and \textit{small}
  $(i, j)$ cells, respectively.

  Now recall that
  \begin{align}
    D_{TV}(\Pi, \Psi)
    \;\le\;
    \| \; \Pi - \Psi \; \|_1
    &\;=\;
      \sum_{(i, j)} | \pi_{i,j} - \psi_{i, j} | \nonumber \\
    &\;=\;
      \sum_{(i, j) \in L} | \pi_{i, j} - \psi_{i, j} |
      \;+\;
      \sum_{(i, j) \in S} | \pi_{i, j} - \psi_{i, j} | . \label{eq:dtv-1}
  \end{align}
  By the triangle inequality, we have that
  \begin{align*}
    \sum_{(i, j) \in S} | \pi_{i, j} - \psi_{i, j} |
    &\;\le\;
    \sum_{(i, j) \in S} | \pi_{i, j}| +  | \psi_{i, j} | \\
    &\;=\;
      \sum_{(i, j) \in S} | \pi_{i, j}| +
      \sum_{(i, j) \in S} | \psi_{i, j} |,
  \end{align*}
  and substituting back into \eqref{eq:dtv-1} gives
  \begin{align}
    D_{TV}(\Pi, \Psi)
    \;\le\;
    \sum_{(i, j) \in L} | \pi_{i, j} - \psi_{i, j} |
    \;+\;
    \sum_{(i, j) \in S} | \pi_{i, j}|
    \;+\;
    \sum_{(i, j) \in S} | \psi_{i, j} | . \label{eq:dtv-2}
  \end{align}
  So to bound $D_{TV}(\Pi, \Psi)$, we will bound
  each term of \eqref{eq:dtv-2} separately. \\

  \textbf{Bound on $\sum | \pi - \psi|$ for large $(i, j)$}: 
  
  Observe that $\psi_{i, j}$ is the fraction of
  points of $D_n$ contained in cell $(i, j)$ of $\Gamma$.
  Each point has probability $\pi_{i, j}$ of falling in
  cell $(i, j)$,
  so $n \cdot \psi_{i, j}$ is the sum of $n$ iid Bernoullis, 
  each with mean $\pi_{i, j}$.

  Using the two-sided Chernoff bound from \eqref{eq:chernoff-double},
  we then have that for any $(i, j)$
  \begin{align}
    \Pr\left[
    | \; n\psi_{i,j} - n\pi_{i, j} \; | \ge t n \pi_{i, j}
    \right]
    \;\le\;
    2 \cdot \exp\left(
    - \frac{n\pi_{i, j} \cdot t^2}{3}
    \right) \label{eq:cb-large-1} 
  \end{align}
  for any $0 < t < 1$.

  Since $\pi_{i, j} > (9 \log_2 n)/n$ for each \textit{large} $(i, j)$,
  observe that setting $t = 3\sqrt{\frac{\log_2 n}{n \cdot \pi_{i, j}}}$
  means that
  \begin{align*}
    t \;<\;
    \frac{3\sqrt{\log_2  n}}{\sqrt{n}} \cdot \frac{\sqrt{n}}{\sqrt{9 \log_2 n}} = 1,
  \end{align*}
  and thus the bound in \eqref{eq:cb-large-1} can be applied\footnote{%
    The only lower bound constraint on $\pi$ for large $(i,j)$
    comes from ensuring $t < 1$
    so that our Chernoff bound variant can be applied. 
    This constraint also determines the $\sqrt{n}$ denominator in
    \eqref{eq:large-bound}.}.

  Then for each \textit{large} $(i, j)$, using this setting of
  $t = 3\sqrt{\frac{\log_2 n}{n \cdot \pi_{i, j}}}$ gives
  \begin{align*}
    \Pr\left[
    | \; n\psi_{i,j} - n\pi_{i, j} \; | \ge t n \pi_{i, j}
    \right]
    &\;\le\;
    2 \cdot \exp\left(
    - \frac{n\pi_{i, j} \cdot t^2}{3}
    \right) \\
    &\;\le\;
      2 \cdot \exp\left(
      - 3 \log_2 n
      \right) 
    \;\le\;
      \frac{2}{n^3}.
  \end{align*}
  So for each \textit{large} $(i, j)$, with probability
  at least $1 - 2n^{-3}$ we have 
  \begin{align*}
    |\; n\pi_{i, j} - n\psi_{i, j} \;|
    \;\le\;
    t n \pi_{i, j}
    \;\;\Longleftrightarrow\;\;
    |\; \pi_{i, j} - \psi_{i, j} \;|
    \;\le\;
    t \pi_{i, j}
  \end{align*}
  which means by our setting of $t$ that
  \begin{align*}
    |\; \pi_{i, j} - \psi_{i, j} \;|
    \;\le\;
    \frac{3 \log_2^{0.5} n}{\sqrt{\pi_{i, j} n}} \cdot \pi_{i, j} 
    \;=\;
       \frac{3\log_2^{0.5} n}{\sqrt{n}} \cdot \sqrt{\pi_{i, j}}
    \;\le\;
      \frac{3\log_2^{0.5} n}{\sqrt{n}}.
  \end{align*}
  Here, the last inequality holds given that $\pi_{i, j} \le 1$
  for any \textit{large} $(i, j)$.

  Now summing over all \textit{large} $(i, j)$ and taking
  a union bound, we have
  \begin{align}
    \sum_{(i, j) \in L} | \; \pi_{i, j} - \psi_{i, j} \; |
    \;\le\;
    k\ell n^{2\epsilon} \cdot \frac{3\log_2^{0.5} n}{n^{0.5}}
    \;\le\;
    O\left(
    \frac{n^{\alpha + 2\epsilon} \cdot \log_2^{0.5}n}{n^{0.5}}    
    \right)
    \label{eq:large-bound}
  \end{align}
  with probability at least $1 - O(n^{\alpha + 2\epsilon - 3})$,
  since $|L| \le C(n) = k\ell n^{2\epsilon}$
  and by the assumption that $k\ell \le B(n) = O(n^\alpha)$. \\

  \textbf{Bound on $\sum | \pi | $ for small $(i, j)$}:

  Recall that cell $(i, j)$ is \textit{small} if
  $\pi_{i, j} \le \frac{9 \log_2 n}{n}$, and so
  with probability 1:
  \begin{align*}
    \sum_{(i, j) \in S} \pi_{i, j}
    \;\le\;
    \frac{9 \cdot C(n) \log_2 n}{n} . 
  \end{align*}

  \textbf{Bound on $\sum | \psi | $ for small $(i, j)$}:
    
  Observe that
  $n \cdot \left(\sum_{(i, j) \in S} \psi_{i, j}\right)$
  is the total number of points of $D_n$ contained in
  \textit{small} $(i, j)$ cells of $\Gamma$ and is thus
  the sum of $n$ iid Bernoullis, each with mean
  $\sum_{(i, j) \in S} \pi_{i, j}$.

  So in expectation we have
  \begin{align*}
    \E\left[\;
    n \cdot \left(\sum_{(i, j) \in S} \psi_{i, j}\right)
    \;\right]
    &\;=\;
    n \cdot \left(\sum_{(i, j) \in S} \pi_{i, j}\right) \\
    &\;\le\;
      n \cdot
      \left( \frac{9 \cdot C(n) \log_2 n}{n} \right)
      \;=\;
      9 \cdot C(n) \log_2 n,
  \end{align*}
  where the inequality is due to the bound on $\sum | \pi|$
  for \textit{small} $(i, j)$ from the previous step.
  
  Now using the upper Chernoff bound from \eqref{eq:upper} and
  setting $t = 1$ gives
  \begin{align*}
    \Pr\left[\;
    n \cdot \left(\sum_{(i, j) \in S} \psi_{i, j}\right)
    \;\ge\;
    18 \cdot C(n) \log_2 n
    \;\right]
    &\;\le\;
      \exp\left(
      - 3 k \ell n^{2\epsilon} \cdot \log_2 n
      \right) \\
    &\;\le\;
      \exp\left(
      - \Omega(n^{2\epsilon})
      \right)
  \end{align*}
  since $C(n) = k\ell n^{2\epsilon}$ and $k, \ell \ge 2$.

  This means that with probability at least
  $1 - e^{-\Omega(n^{2\epsilon})}$ we have
  \begin{align*}
    n \cdot \left(\sum_{(i, j) \in S} \psi_{i, j}\right)
    \;<\; 
    18 \cdot C(n) \log_2 n
  \end{align*}
  and so
  \begin{align*}
    \sum_{(i, j) \in S} \psi_{i, j}
    \;<\;
    18 \cdot \frac{C(n) \log_2 n}{n}
    \;=\;
    O\left(\frac{n^{\alpha + 2\epsilon} \cdot \log_2 n}{n}\right) .
  \end{align*} 

  \textbf{Final bound on $D_{TV}(\Pi, \Psi)$}

  To conclude the proof, using the preceding three individual bounds
  on the terms from \eqref{eq:dtv-2} we have that  
  \begin{align*}
    D_{TV}(\Pi, \Psi)
    &\;\le\;
      \sum_{(i, j) \in L} | \pi_{i, j} - \psi_{i, j} |
      \;+\;
      \sum_{(i, j) \in S} | \pi_{i, j}|
      \;+\;
      \sum_{(i, j) \in S} | \psi_{i, j} | \\
    &\;\le\;
      O\left(
      \frac{n^{\alpha + 2\epsilon} \cdot \log_2^{0.5} n}{n^{0.5}}
      \right)
      \;+\;
      O\left(
      \frac{n^{\alpha + 2\epsilon} \cdot \log_2 n}{n}
      \right)
      \;+\;
      O\left(
      \frac{n^{\alpha + 2\epsilon} \cdot \log_2 n}{n}
      \right) \\
    & \;=\;
      O\left(
      \frac{n^{\alpha + 2\epsilon} \cdot \log_2^{0.5} n}{n^{0.5}}
      \right)
  \end{align*}
  with probability at least
  $1 - O(n^{\alpha + 2\epsilon - 3}) - e^{- \Omega(n^{2\epsilon})}
  \ge 1 - O(n^{\alpha + 2\epsilon - 3})$.
\end{proof}


%
%

\section{Conclusion}
\label{sec:conclusion}

We emphasize that Theorem~\ref{thm:final-goal1}, which
is the core new result needed to prove the
consistency of the $\mic$ estimator, yields a slightly
weaker result than in the original claim of \cite{reshef16-mice}.
In our theorem, we show that the difference between
the corresponding $(k, \ell)$ entries of $\widehat M$ and $M$
is small when $k\ell \le B(n) = O(n^\alpha)$
for $0 < \alpha < 0.5$ with high probability. 
The original claim of Reshef et al. sought to prove
the same result for $k \ell \le B(n) = O(n^\alpha)$ for
$0 < \alpha < 1$.
We suspect that the statement of Theorem~\ref{thm:final-goal1}
\textit{does} hold for $0 < \alpha < 1$, but that the techniques
used here are insufficient to prove this stronger claim.


\paragraph{Acknowledgements}
We would like to thank Yakir Reshef for his
time and help in discussing his original work.
We would also like to thank Joan Feigenbaum for her helpful guidance.
This work has been supported by the Office of Naval Research.

\bibliographystyle{alpha}
\bibliography{references.bib}

\end{document}